\pgfplotsset{compat=1.18}
\newcommand{\tr}{\mathop{\rm tr}}
\newcommand{\diag}{\mathop{\rm diag}}
\newcommand{\lea}{\texttt{L}}
\newcommand{\fow}{\texttt{F}}
\newcommand{\norm}[1]{\left\lVert#1\right\rVert}
\newcommand{\mnorm}[1]{{\left\vert\kern-0.25ex\left\vert\kern-0.25ex\left\vert #1 
    \right\vert\kern-0.25ex\right\vert\kern-0.25ex\right\vert}}
\newtheorem{proposition}{Proposition}
\newcommand{\eg}{{\it e.g.}}
\newcommand{\ie}{{\it i.e.}}
\definecolor{yygreen}{rgb}{0 1 0}
\definecolor{yypurple}{rgb}{0.4940 0.1840 0.556}
\definecolor{yyred}{rgb}{1 0 0}
\definecolor{yyblue}{rgb}{0 0 1}
\title{\LARGE \bf Active Inverse Learning in Stackelberg Trajectory Games
}
\author{William Ward, Yue Yu, Jacob Levy, Negar Mehr, David Fridovich-Keil, and Ufuk Topcu
\thanks{W. Ward, J. Levy, D. Fridovich-Keil, and U. Topcu are with the Oden Institute for Computational Engineering and Sciences at the University of Texas at Austin, Austin, TX, 78712, USA (emails: wward@utexas.edu,\, jake.levy@utexas.edu,\, dfk@utexas.edu,\, utopcu@utexas.edu). Y. Yu is with the Department of Aerospace Engineering and Mechanics at the University of Minnesota Twin Cities, Minneapolis, MN 55455 (email: yuey@umn.edu). N. Mehr is with the Department of Mechanical Engineering at the University of California Berkeley, Berkeley, CA, 94720, USA (email: negar@berkeley.edu).}%
}
\begin{document}

\maketitle
\thispagestyle{empty}
\pagestyle{empty}

\begin{abstract}
Game-theoretic inverse learning is the problem of inferring a player's objectives from their actions. We formulate an inverse learning problem in a Stackelberg game between a leader and a follower, where each player's action is the trajectory of a dynamical system. We propose an active inverse learning method for the leader to infer which hypothesis among a finite set of candidates best describes the follower's objective function. Instead of using passively observed trajectories like existing methods, we actively maximize the differences in the follower's trajectories under different hypotheses by optimizing the leader's control inputs. Compared with uniformly random inputs, the optimized inputs accelerate the convergence of the estimated probability of different hypotheses conditioned on the follower's trajectory. We demonstrate the proposed method in a receding-horizon repeated trajectory game and simulate the results using virtual TurtleBots in Gazebo.
\end{abstract}

\section{Introduction}
Learning to predict human behavior is a critical challenge in human-robot interaction. It enables robots to customize their strategies in various applications, including assisted driving \cite{shia2014semiautonomous,mehr2016inferring}, traffic management \cite{mehr2023maximum,sadigh2016information}, and, in general, mitigating conflicts in human-in-the-loop robotic systems.    

Game-theoretic inverse learning helps robots explain and predict human behavior in noncooperative interactions where humans actively optimize only their own objectives \cite{waugh2010inverse,kuleshov2015inverse,bertsimas2015data,peters2021inferring,molloy2022inverse,yu2022inverse,li2023cost,mehr2023maximum}. The idea is to first model humans' objectives as parameterized functions, then infer the parameter value such that the corresponding game-theoretic strategies---such as Nash or Stackelberg equilibrium strategies---match the humans' actions in a dataset. Game-theoretic inverse learning is a necessary step in understanding human-robot interactions \cite{ling2018game,chen2023soft,mehr2023maximum} and designing incentives for multiagent systems \cite{bacsar1984affine,nisan2015algorithmic}.  
 
The existing game-theoretic inverse learning methods are \emph{passive}. In particular, these methods record the dataset of human actions before and independently of the inference process. Hence, some actions in the recorded dataset are uninformative for inference purposes, or simply redundant. As a result, in some settings, passive inverse learning is insufficient for rapid inferencing and online real-time decision-making \cite{sadigh2016information}. 
  
In contrast to passive inverse learning, active inverse learning helps robots infer human intentions in cooperative interactions by actively provoking informative human responses. For example, when learning objectives that explain a human's ranking or rating of presented options, active inverse learning methods first provoke informative human responses and record them in the dataset, then infer the human's objective function, and repeat this process if necessary \cite{lopes2009active,akrour2012april,sadigh2017active,christiano2017deep,daniel2014active}. These methods ensure that the human's actions are informative by maximizing the volume removed from the hypothesis space \cite{sadigh2017active,biyik2018batch,biyik2019green,palan2019learning,katz2019learning,basu2019active} or by maximizing the information gain \cite{cohn2011comparing,daniel2015active,sadigh2016information,cui2018active,erdem2020asking,myers2022learning}. By integrating dataset updates with inference, active inverse learning provides practical solutions for inferring human intentions from limited interactions. 

Despite its successes, active inverse learning still requires investigation in noncooperative interactions. The existing active inverse learning methods rely on querying humans who volunteer informative responses. In contrast, humans in noncooperative interactions only take actions that optimize their own objectives, regardless of whether or not the actions are informative. Therefore, how to provoke informative actions from noncooperative humans that reveal their objectives is, to our best knowledge, still an open question. 

We study Stackelberg games, a specific class of noncooperative interactions between two players: a leader and a follower. In a Stackelberg game, the leader chooses an action, and then the follower responds by observing the leader's behavior and optimizing its objective function. 

We formulate an inverse learning problem in a Stackelberg game where a rational leader, such as a robot, infers which hypothesis among finitely many candidates best explains the behavior of a boundedly rational follower, such as a human. This problem is particularly relevant in shared autonomy, \eg, when an autopilot must infer the type of a newly encountered human driver. 

We model each player's action as the trajectory of a linear time-invariant system. The follower tracks a linear function of the leader's trajectory using a maximum-entropy linear quadratic regulator. The regulator contains a parameterized objective function and models bounded rationality in human decision-making \cite{mehr2023maximum}. The leader determines which hypothesis is most likely using the probability of each hypothesis conditioned on the follower's state trajectory.


We propose an active inverse learning method to provoke informative trajectories from the follower by optimizing the leader's inputs. In this optimization, we maximize the differences in the follower's trajectory distributions under different hypotheses. We show that this optimization is a difference-of-convex program \cite{horst1999dc}, which can be solved efficiently via the convex-concave procedure \cite{lipp2016variations}. We evaluate the performance of the proposed method in a receding-horizon repeated trajectory game. Compared with random inputs, the leader inputs provided by our method accelerate the convergence of the probability of different hypotheses conditioned on the follower's trajectory.

\noindent\textit{Notation:} We let \(\mathbb{R}\), \(\mathbb{R}_{\geq 0}\), and \(\mathbb{N}\) denote the set of real numbers, nonnegative real numbers, and nonnegative integers, respectively. We let \(\mathbb{S}_{\succeq 0}^n\) and \(\mathbb{S}_{\succ 0}^n\) denote the set of \(n\) by \(n\) symmetric positive semidefinite and positive definite matrices, respectively. For any \(x\in\mathbb{R}^n\), we let \(\norm{x}\coloneqq \sqrt{x^\top x}\), \(\norm{x}_1\coloneqq \sum_{i=1}^n |x_i|\), \(\norm{x}_\infty\coloneqq \max_{1\leq i\leq n} |x_i|\), and  \(\norm{x}_A^2\coloneqq x^\top A x\) for all \(A\in\mathbb{S}_{\succeq 0}^n\). We let \(0_n\) denote the \(n\)-dimensional zero vector; \(I_n\) and \(0_{n\times n}\) denote the \(n\) by \(n\) identity and zero matrix, respectively. We let \(\mathcal{N}(\mu, \Sigma)\) denote the Gaussian distribution with mean \(\mu\in\mathbb{R}^n\) and variance \(\Sigma\in\mathbb{S}_{\succ 0}^n\). Given \(n_1, n_2\in\mathbb{N}\), we let \([n_1, n_2]\) denote the set of integers between \(n_1\) and \(n_2\). Given \(a_i\in\mathbb{R}^n\) for all \(i\in\mathbb{N}\), we let \(a_{i:j}\coloneqq \begin{bmatrix} a_i^\top & a_{i+1}^\top & \cdots & a_j^\top\end{bmatrix}^\top\) for all \(i< j\), \(i, j\in\mathbb{N}\).

\section{Linear quadratic Stackelberg trajectory games}
\label{sec: stackelberg}
We introduce a Stackelberg game between a rational leader, such as a robot, and a boundedly rational follower, such as a human with noisy behavior. The players' actions are trajectories of stochastic linear time-invariant systems. 

\subsection{The dynamics of the players' systems}
We assume that the leader's state evolves according to the following discrete-time linear time-invariant dynamics:
\begin{equation}\label{sys: leader lti}
    x_{t+1}^\lea=A^\lea x_t^\lea+B^\lea u_t^\lea+w_t^\lea
\end{equation}
for all \(t\in\mathbb{N}\), where \(x_t^\lea\in\mathbb{R}^{n_\lea}\), \(u_t^\lea\in\mathbb{R}^{m_\lea}\), and \(w_t^\lea\in\mathbb{R}^{n_\lea}\) are the state, input, and disturbance of the system at time \(t\in\mathbb{N}\), respectively; \(A^\lea\in\mathbb{R}^{n_\lea\times n_\lea}\) and \(B^\lea \in\mathbb{R}^{n_\lea\times m_\lea}\) are the leader's system parameters.  In our experiments, Equation \eqref{sys: leader lti} to approximates the dynamics of a ground robot. 

Similarly, the follower's state evolves according to the following dynamics:
\begin{equation}\label{sys: follower lti}
    x_{t+1}^\fow=A^\fow x_t^\fow+B^\fow u_t^\fow+w_t^\fow
\end{equation}
for all \(t\in\mathbb{N}\), where \(x_t^\fow\in\mathbb{R}^{n_\fow}\), \(u_t^\fow\in\mathbb{R}^{m_\fow}\), and \(w_t^\fow\in\mathbb{R}^{n_\fow}\) denote the state, input, and disturbance of the system at time \(t\in\mathbb{N}\), respectively; \(A^\fow\in\mathbb{R}^{n_\fow\times n_\fow}\) and \(B^\fow \in\mathbb{R}^{n_\fow\times m_\fow}\) are the follower's system parameters. 

Throughout, we assume that the disturbances in the leader's system are independent, identically distributed Gaussian vectors. Similarly, the disturbances in the follower's system are independent, identically distributed Gaussian vectors. In other words, there exists \(\Omega^\lea\in\mathbb{S}^{n_\lea}_{\succ 0}\) and \(\Omega^\fow\in\mathbb{S}^{n_\fow}_{\succ 0}\) such that, for any \(t\in\mathbb{N}\), we have
\begin{equation}\label{eqn: gaussian disturbance}
    w_t^\lea\sim\mathcal{N}(0_{n_\lea}, \Omega^\lea), \, w_t^\fow\sim\mathcal{N}(0_{n_\fow}, \Omega^\fow).
\end{equation}

\subsection{The players' objectives}
We assume that the follower's objective is to track a linear function of the leader's trajectory. In particular, we let \(M^\fow\in\mathbb{R}^{n_\fow\times n_\lea}\) denote the parameters of a linear transformation that maps the leader's internal state to an output reference observable to the follower. Let \(x_{0:\tau}^\lea\) denote a leader trajectory of length \(\tau\in\mathbb{N}\). 
The follower's objective is to simultaneously track the corresponding output trajectory \(\{M^\fow x^\lea_0, M^\fow x^\lea_1, \ldots, M^\fow x^\lea_\tau\}\) and minimize its input efforts. 

We assume that the follower is boundedly rational and chooses its input according to the maximum entropy principle. Consequently, the distribution of \(u_t^\fow\) conditioned on \(x_t^\fow\) is Gaussian, \ie, \(\mu_t^\fow|x_t^\fow\sim \mathcal{N}(\mu_t, \Sigma_t)\) for some \(\mu_t\in\mathbb{R}^{m_\fow}\) and \(\Sigma_t\in\mathbb{R}^{m_\fow\times m_\fow}\) \cite{mehr2023maximum}. In particular, \((\mu_{0:\tau-1}, \Sigma_{0:\tau-1})\) is optimal for the following stochastic trajectory optimization problem:    
\begin{equation}\label{opt: f-lqr}
    \begin{array}{ll}
    \underset{\mu_{0:\tau-1}, \Sigma_{0:\tau-1}}{\mbox{minimize}} & \sum_{t=0}^{\tau} \mathds{E}\left[ \frac{1}{2}\norm{x_t^\fow-M^\fow x_t^\lea}_{Q^\fow}^2\right]\\
    &+\frac{1}{2}\sum_{t=0}^{\tau-1}\left(\mathds{E}\left[ \norm{u_t^\fow}_{R^\fow}^2 \right]-\log\det\Sigma_t\right) \\
    \mbox{subject to} & x_{t+1}^\fow=A^\fow x_t^\fow+B^\fow u_t^\fow+w_t^\fow, \, x_0^\fow=\hat{x}_0^\fow,\\
    & u_t^\fow|x_t^\fow\sim\mathcal{N}(\mu_t, \Sigma_t), \, w_t^\fow\sim\mathcal{N}(0_{n_\fow}, \Omega^\fow),\\
    &t\in[0, \tau-1],
    \end{array}
\end{equation}
where \(\hat{x}^{n_\fow}_0\in\mathbb{R}^{n_\fow}\) is the initial state of the follower's system, \(\mathds{E}[\cdot]\) denotes the expectation; \(Q^\fow\in\mathbb{S}^{n_\fow}_{\succeq 0}\), and \(R^\fow\in\mathbb{S}^{m_\fow}_{\succ 0}\) are the follower's cost parameters. The objective function in optimization~\eqref{opt: f-lqr} captures boundedly rational human decisions: it is noisy but centers around a cost-minimizing rational decision \cite{VonNeumann1944}, \cite{Duncan1959}, \cite{ziebart2008maximum}, \cite{wulfmeier2016maximum}, \cite{wu2020efficient}.     

The following proposition provides a closed-form expression for the solution of optimization~\eqref{opt: f-lqr}. 

\begin{proposition}\label{prop: follower riccati}
Let 
\begin{subequations}\label{eqn: follower DP}
    \begin{align}
    &P^\fow_{\tau}=Q^\fow,\, P^\fow_t = Q^\fow+(A^\fow)^\top P^\fow_{t+1} E^\fow_t,\\
    & F^\fow_t=B^\fow(R^\fow+(B^\fow)^\top P^\fow_{t+1}B^\fow)^{-1} (B^\fow)^\top,\\
    & E^\fow_t= A^\fow-F^\fow_t P^\fow_{t+1}A^\fow,\\
    &q_\tau^\fow=-Q^\fow M^\fow x_\tau^\lea, \\ 
    &q_t^\fow= (E^\fow_t)^\top q_{t+1}^\fow- Q^\fow M^\fow x_t^\lea,
    \end{align}
\end{subequations}
for all \(t\in[0,\tau-1]\). Given \(x^\lea_{0:\tau}\), \((\mu_{0:\tau-1}, \Sigma_{0:\tau-1})\) is optimal for optimization~\eqref{opt: f-lqr} if and only if
\begin{subequations}\label{eqn: mean & var for follower input}
    \begin{align}
        \Sigma_t & = (R^\fow+(B^\fow)^\top P^\fow_{t+1}B^\fow)^{-1},\\
        \mu_t & = -\Sigma_t(B^\fow)^\top (P^\fow_{t+1}A^\fow x_t^\fow+q_{t+1}^\fow).  
    \end{align}
\end{subequations}
for all \(t\in[0, \tau-1]\). Furthermore, if the constraints in \eqref{opt: f-lqr} hold, then \(x_t^\fow\sim\mathcal{N}(\xi_t, \Lambda_t)\) for all \(t\in[0, \tau]\), where\begin{subequations}\label{eqn: follower state traj}
    \begin{align}
        \xi_{t+1}&=E^\fow_t\xi_t-F^\fow_t q_{t+1}^\fow, \label{eqn: follower state mean}\\
        \Lambda_{t+1}&= E^\fow_t\Lambda_t(E^\fow_t)^\top +F^\fow_t+\Omega^\fow, \label{eqn: follower state var}
    \end{align}
\end{subequations}
for all \(t\in[0, \tau-1]\), with \(\xi_0=\hat{x}^\fow_0\) and \(\Lambda_0=0_{n_\fow\times n_\fow}\).
\end{proposition}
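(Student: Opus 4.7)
The plan is to prove the proposition by backward dynamic programming on optimization~\eqref{opt: f-lqr}, together with a direct forward propagation for the state distribution. I will guess a value function of the form
\(V_t(x)=\tfrac{1}{2}x^\top P_t^\fow x+(q_t^\fow)^\top x+c_t\)
and verify by induction that the optimal \((\mu_t,\Sigma_t)\) have the form stated in \eqref{eqn: mean & var for follower input} and that \(P_t^\fow,q_t^\fow\) obey \eqref{eqn: follower DP}. The base case \(t=\tau\) is immediate from the terminal stage cost, giving \(P_\tau^\fow=Q^\fow\) and \(q_\tau^\fow=-Q^\fow M^\fow x_\tau^\lea\).

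For the inductive step I would fix \(x_t^\fow=x\) and use the Gaussian moment identities \(\mathds{E}_{u\sim\mathcal{N}(\mu,\Sigma)}[\norm{u}_{R^\fow}^2]=\norm{\mu}_{R^\fow}^2+\operatorname{tr}(R^\fow\Sigma)\) together with \(\mathds{E}[y^\top P y]=\bar{y}^\top P\bar{y}+\operatorname{tr}(P\,\mathrm{Cov}(y))\) applied to \(x_{t+1}^\fow=A^\fow x+B^\fow u_t^\fow+w_t^\fow\). Substituting the quadratic ansatz into the Bellman equation yields a function of \((\mu_t,\Sigma_t)\) that splits into a quadratic in \(\mu_t\) and an independent term
\(\tfrac{1}{2}\operatorname{tr}\!\bigl((R^\fow+(B^\fow)^\top P_{t+1}^\fow B^\fow)\Sigma_t\bigr)-\tfrac{1}{2}\log\det\Sigma_t\)
in \(\Sigma_t\). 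Setting the matrix derivative of the latter to zero recovers \(\Sigma_t=(R^\fow+(B^\fow)^\top P_{t+1}^\fow B^\fow)^{-1}\); minimizing the quadratic in \(\mu_t\) (classical LQR algebra, using the matrix inversion lemma to introduce \(F_t^\fow\)) yields the claimed \(\mu_t\). Plugging the minimizers back and collecting the coefficients of \(xx^\top\) and \(x\) reproduces exactly the recursions for \(P_t^\fow\) and \(q_t^\fow\); the residual \(x\)-independent piece is absorbed into \(c_t\) and plays no role in the optimal policy. Since the stage cost is strictly convex in \((\mu_t,\Sigma_t\succ 0)\), this stationary point is the unique global minimizer, which gives the ``if and only if'' conclusion.

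For the distribution claim I would plug the optimal policy into the dynamics. Writing \(u_t^\fow=\mu_t+\eta_t\) with \(\eta_t\sim\mathcal{N}(0,\Sigma_t)\) independent of \(w_t^\fow\), a line of algebra using the definitions of \(E_t^\fow\) and \(F_t^\fow\) and the identity \(B^\fow\Sigma_t(B^\fow)^\top=F_t^\fow\) yields the closed-loop recursion
\(x_{t+1}^\fow=E_t^\fow x_t^\fow-F_t^\fow q_{t+1}^\fow+B^\fow\eta_t+w_t^\fow.\)
Affine maps of independent Gaussians are Gaussian, so by induction \(x_t^\fow\sim\mathcal{N}(\xi_t,\Lambda_t)\); taking expectation gives \eqref{eqn: follower state mean}, and propagating the covariance (with the cross-terms vanishing by independence) gives \eqref{eqn: follower state var}, with the base case \(\xi_0=\hat{x}_0^\fow\), \(\Lambda_0=0_{n_\fow\times n_\fow}\) because \(x_0^\fow\) is deterministic.

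The main obstacle I expect is bookkeeping: carefully tracking the quadratic, linear, constant, and trace/log\,det pieces through the Bellman step so that the coefficients line up with the recursion exactly as in \eqref{eqn: follower DP}, and verifying the algebraic identity that converts the naive Riccati expression \(A^\top P_{t+1}A-A^\top P_{t+1}B(R+B^\top P_{t+1}B)^{-1}B^\top P_{t+1}A\) into the compact form \((A^\fow)^\top P_{t+1}^\fow E_t^\fow\) used in the statement.
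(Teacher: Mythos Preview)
Your proposal is correct and follows essentially the same route as the paper: backward dynamic programming with a quadratic value-function ansatz to obtain \eqref{eqn: mean & var for follower input} and \eqref{eqn: follower DP}, followed by forward propagation of the Gaussian state under the closed-loop dynamics to obtain \eqref{eqn: follower state traj}. The only cosmetic difference is that the paper phrases the forward step via the joint Gaussian of \((x_t^\fow,u_t^\fow)\) (citing Bishop), whereas you write the closed-loop recursion \(x_{t+1}^\fow=E_t^\fow x_t^\fow-F_t^\fow q_{t+1}^\fow+B^\fow\eta_t+w_t^\fow\) directly; both lead to the same mean/covariance recursion once you use \(B^\fow\Sigma_t(B^\fow)^\top=F_t^\fow\).
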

\begin{proof}
    See Appendix.
\end{proof}

The leader's objective is to minimize a cost function that jointly depends on the follower's expected trajectory and the leader's trajectory. To this end, we assume that the leader acts rationally and chooses its trajectory \(x^\lea_{0:\tau}\) as a solution to the following trajectory optimization problem: 
\begin{equation}\label{opt: l-lqr}
    \begin{array}{ll}
    \underset{u_{0:\tau-1}^\lea}{\mbox{minimize}} & \mathds{E}[f(x^\lea_{0:\tau}, x^\fow_{0:\tau})]+  g(u^\lea_{0:\tau})\\
    \mbox{subject to} & x_{t+1}^\lea=A^\lea x_t^\lea+B^\lea u_t^\lea+w_t^\lea,\, x_0^\lea = \hat{x}^\lea_0,\\
     & x_{t+1}^\fow=A^\fow x_t^\fow+B^\fow u_t^\fow+w_t^\fow, \, x_0^\fow=\hat{x}_0^\fow,\\
    &  w_t^\lea \sim\mathcal{N}(0_{n_\lea}, \Omega^\lea), \, w_t^\fow\sim\mathcal{N}(0_{n_\fow}, \Omega^\fow),\\
    & u_t\in\mathbb{U},\,u_t^\fow|x_t^\fow\sim\mathcal{N}(\mu_t, \Sigma_t),\, t\in[0, \tau-1], \\
    & (\mu_{0:\tau-1}, \Sigma_{0:\tau-1}) \text{ is optimal for \eqref{opt: f-lqr},}
    \end{array}
\end{equation}
where \(\hat{x}^\lea_0\in\mathbb{R}^{n_\lea}\) is the initial state of the leader's system and \(\mathbb{U}\subset\mathbb{R}^{m_\fow}\) is the set of feasible leader inputs at each time. Furthermore, \(f:\mathbb{R}^{(\tau+1)n_\lea}\times \mathbb{R}^{(\tau+1)n_\lea}\to\mathbb{R}\) is the leader's cost function that jointly depends on the leader's state trajectory \(x^\lea_{0:\tau}\) and the follower's state trajectory \(x^\fow_{0:\tau}\); 
\(g:\mathbb{R}^{\tau m_\lea}\to\mathbb{R}\) is a cost function that only depends on the leader's input trajectory. By choosing different functions for \(f\) and \(g\), optimization~\eqref{opt: l-lqr} achieves different trade-offs between optimizing the leader and the follower's trajectory.  

Problem \eqref{opt: l-lqr} is a Stackelberg game, also known as a \emph{bilevel optimization} problem. See \cite{dempe2020bilevel} and references therein for a detailed discussion of bilevel optimization.

\section{Active inverse learning via difference maximization}
\label{sec: dist max}
Given the Stackelberg trajectory game introduced in Section~\ref{sec: stackelberg}, we now consider the case where the leader does not know the parameter tuple \((Q^\fow, R^\fow, M^\fow)\) in the follower's objective, except that it is one of finitely many candidates. In other words, the leader knows that there exist 
\(Q^1, \ldots, Q^d\in\mathbb{R}^{n^\fow\times n^\fow}\), \(R^1, \ldots, R^d\in\mathbb{R}^{m^\fow\times m^\fow}\), and \(M^1, \ldots, M^d\in\mathbb{R}^{n^\fow\times n^\lea}\) such that
\begin{equation}\label{eqn: hypo}
     (Q^\fow, R^\fow, M^\fow)= (Q^i, R^i, M^i)
\end{equation}
for some \(i\in[1, d]\). This case arises, for example, when a robot has already learned different types of human behavior offline but needs to determine the type of a newly encountered human via online interaction. In the following, we let
\(\theta^\fow\coloneqq (Q^\fow, R^\fow, M^\fow)\) and \(\theta^i\coloneqq (Q^i, R^i, M^i)\)
for all \(i\in[1, d]\). We say that \emph{hypothesis \(i\) is true} if \eqref{eqn: hypo} holds.

Based on a prior probability distribution of all hypotheses that gives the value of \(\mathds{P}(\theta^\fow=\theta^i|x_0^\fow)\) for all \(i\in[1, d]\) and the follower's trajectory \(x^\fow_{1:\tau}\), the leader can infer whether hypothesis \(i\) is more likely to be true than hypothesis \(j\) by computing the following ratio:
\begin{equation}\label{eqn: Bayes rule}
    \frac{\mathds{P}(\theta^\fow=\theta^i|x^\fow_{0:\tau})}{\mathds{P}(\theta^\fow=\theta^j|x^\fow_{0:\tau})}=\frac{\mathds{P}(\theta^\fow=\theta^i|x^\fow_{0})\mathds{P}(x^\fow_{1:\tau}|\theta^\fow=\theta^i)}{\mathds{P}(\theta^\fow=\theta^j|x^\fow_{0})\mathds{P}(x^\fow_{1:\tau}|\theta^\fow=\theta^j)} .
\end{equation}
If the ratio in \eqref{eqn: Bayes rule} is greater than one, then trajectory \(x^\fow_{0:\tau}\) is more likely to occur under hypothesis \(i\) than under hypothesis \(j\), and vice versa.

However, observing the follower's trajectory can be uninformative for the inference above if the trajectories under different hypotheses are similar. For example, suppose that
\begin{equation}\label{eqn: similar response}
    \mathds{P}(x^\fow_{1:\tau}|\theta^\fow=\theta^i)\approx\mathds{P}(x^\fow_{1:\tau}|\theta^\fow=\theta^j)
\end{equation}
for some \(i\neq j\); then, \eqref{eqn: Bayes rule} implies that \( \frac{\mathds{P}(\theta^\fow =\theta^i|x^\fow_{0:\tau})}{\mathds{P}(\theta^\fow=\theta^j|x^\fow_{0:\tau})} \approx \frac{\mathds{P}(\theta^\fow=\theta^i|x^\fow_{0})}{\mathds{P}(\theta^\fow=\theta^j|x^\fow_{0})}\). In other words, observing the follower's ongoing trajectory \(x^\fow_{1:\tau}\) does not help the leader distinguish hypothesis \(i\) from \(j\). In the following, we discuss how the leader can actively avoid the scenarios where \eqref{eqn: similar response} happens. 

\subsection{The follower's trajectories under different hypotheses}

To avoid the scenarios where \eqref{eqn: similar response} happens, it suffices to make the follower's trajectory distributions under different hypotheses as different as possible. To this end, we first take a closer look at the follower's trajectory. Proposition~\ref{prop: follower riccati} shows that the follower's state at each time is a Gaussian random variable. Particularly, let
\begin{subequations}\label{eqn: hypothetical response}
    \begin{align}
     & E^i_t= A^\fow-F^i_t P^i_{t+1}A^\fow,\\
    & F^i_t=B^\fow(R^i+(B^\fow)^\top P^i_{t+1}B^\fow)^{-1} (B^\fow)^\top,\\
    &P_{\tau}^i=Q^i,\, P^i_t = Q^i+(A^\fow)^\top P^i_{t+1} E^i_t,\\
    & \Lambda_0^i=0_{n_\fow\times n_\fow}, \,\Lambda_{t+1}^i= E^i_t\Lambda_t^i(E^i_t)^\top +F^i_t+\Omega^\fow,\\
     & q_\tau^i  =-Q^i M^i x_\tau^\lea, \\
     & q_t^i  = (E^i_t)^\top q_{t+1}^i-Q^i M^i x_t^\lea,\\
        & \xi_0^i  =\hat{x}^\fow_0, \, \xi_{t+1}^i = E^i_t\xi_t^i +F^i_t q_{t+1}^i, 
    \end{align}
\end{subequations}
for all \(t\in[0, \tau-1]\). 
Proposition~\ref{prop: follower riccati} implies that, if hypothesis \(i\) is true, \ie, \(\theta^\fow=\theta^i\), then
\begin{equation}\label{eqn: traj Gaussian}
x^\fow_t\sim \mathcal{G}_t^i\coloneqq \mathcal{N}(\xi^i_t, \Lambda^i_t).
\end{equation}

To measure the differences between the trajectory distribution under different hypotheses, we introduce a distance function. To this end, let
\begin{equation}
    \mathbb{D}\coloneqq \{(i, j)~|~ i<j, \, i, j\in[1, d]\}.
\end{equation}
We propose the following function
\begin{equation}\label{eqn: distance}
    D(\mathcal{G}^i_t, \mathcal{G}^j_t) \coloneqq  \norm{\xi^i_t-\xi^j_t}^2_{(\Lambda^i_t)^{-1}+(\Lambda^j_t)^{-1}}
\end{equation}
for all \(t\in[1, \tau]\) and \((i, j)\in\mathbb{D}\). Notice that, since \(\Omega^\fow\in\mathbb{S}_{\succ 0}^{n_\fow}\), \eqref{eqn: hypothetical response} implies that \(\Lambda_t^i\in\mathbb{S}_{\succ 0}^{n_\fow}\) for all \(t\in[1, \tau]\) and \(k\in[1, d]\). One can verify that the distance function in \eqref{eqn: distance} is proportional to the sum of the KL-divergence from \(\mathcal{G}^i_t\) to  \(\mathcal{G}^j_t\) and the KL-divergence from \(\mathcal{G}^j_t\) to  \(\mathcal{G}^i_t\), up to some additive constants.  Later, we use this function to optimize the leader's inputs.

The intuition behind this distance function is to first evaluate (up to a constant of \(2\), for the convenience of notation) the sum of \(D_{KL}(\mathcal{G}^i_t || \mathcal{G}^j_t)\) and \(D_{KL}(\mathcal{G}^j_t || \mathcal{G}^i_t)\), then remove the terms that are independent of \(\xi^i_t\) and \(\xi^j_t\), which are, as suggested by \eqref{eqn: hypothetical response}, independent of the leader's trajectory. 

\subsection{Maximizing the worst-case pairwise distance}
To avoid the scenarios where \eqref{eqn: similar response} happens, we need to maximize the value of the distance function in \eqref{eqn: distance} for any \((i, j)\in\mathbb{D}\). To this end, we define the following \emph{worst-case distance function}, which evaluates the minimum value of function \eqref{eqn: distance} among all \((i, j)\in\mathbb{D}\): 
\begin{equation}\label{eqn: worst-case dist}
\begin{aligned}
    & \textstyle \underset{(i, j)\in\mathbb{D}}{\min} \left\{ \sum_{t=1}^\tau  \norm{\xi_t^i-\xi_t^i}_{ (\Lambda^i_t)^{-1}+(\Lambda^j_t)^{-1}}^2\right\}\\
     &=\textstyle   \sum\limits_{(i, j)\in\mathbb{D}}  \sum_{t=1}^\tau \norm{\xi_t^i-\xi_t^i}_{(\Lambda^i_t)^{-1}+(\Lambda^j_t)^{-1}}^2\\
     &\textstyle -\underset{(i, j)\in\mathbb{D}}{\max} \left\{ \sum\limits_{(k, l)\in\mathbb{D}\setminus \{(i, j)\}}    \sum_{t=1}^\tau\norm{\xi_t^k-\xi_t^l}_{(\Lambda^k)^{-1}_t+(\Lambda^l)^{-1}_t}^2\right\}.
\end{aligned} 
\end{equation}
The second step in \eqref{eqn: worst-case dist} uses the fact that, given any \(\alpha_1, \ldots, \alpha_n\in\mathbb{R}\), we have 
\begin{equation*}
    \textstyle \min\limits_{i\in[1, n]} \alpha_i = \sum\limits_{i\in[1, n]} \alpha_i-\max\limits_{i\in[1, n]}  \sum\limits_{j\in[1, n], j\neq i}\alpha_j.
\end{equation*}
Notice that the distance in \eqref{eqn: worst-case dist} does not include the terms for \(t=0\) because, due to \eqref{eqn: hypothetical response}, \(\xi_0^i=\xi_0^j\) for all \((i, j)\in\mathbb{D}\).

\begin{figure*}[!hbt]
\begin{equation}\label{opt: dcp}
    \begin{array}{ll}    \underset{s, u_{0:\tau-1}^\lea}{\mbox{minimize}} & s-\sum_{(i, j)\in\mathbb{D}}  \sum_{t=0}^\tau\norm{\xi_t^i-\xi_t^j}_{(\Lambda^i_t)^{-1}+(\Lambda^j_t)^{-1}}^2 +  g(u_{0:\tau-1})\\
    \mbox{subject to}
    & \eta_{t+1}^\lea=A^\lea \eta_t^\lea+B^\lea u_t^\lea,\, \eta_0^\lea=\hat{x}^\lea_0,\\
    &q_t^i  = (E_t^i)^\top q_{t+1}^i-Q^i M^i \eta_t,\, q_\tau^i  =-Q^i M^i \eta_\tau, \, \xi_{t+1}^i = E_t^i\xi_t^i -F_t^i q_{t+1}^i,\,  \xi_0^i  =\hat{x}_0^\fow, \\
    &  \sum_{(k, l)\in\mathbb{D}\setminus\{ (i, j)\}} \sum_{t=1}^\tau  \norm{\xi_t^k-\xi_t^l}_{ (\Lambda^k)^{-1}_t+(\Lambda^l)^{-1}_t}^2\leq s,\, \forall t\in[0, \tau-1], \, i\in[1, d],\, (i, j)\in\mathbb{D}.
    \end{array}
\end{equation}
\rule[1ex]{\textwidth}{0.1pt}
\end{figure*}

Based on the worst-case distance in \eqref{eqn: worst-case dist}, we propose optimizing the leader's input trajectory \(u^\lea_{0:\tau-1}\) by solving optimization \eqref{opt: dcp} (see next page) instead of optimization~\eqref{opt: l-lqr}. In ~\eqref{opt: dcp}, \(E_t^i, F_t^i\) and \(\Lambda_t^i\) are given by \eqref{eqn: hypothetical response}. Also, we approximate the leader's state trajectory \(x^\lea_{0:\tau}\) with its expectation, denoted by \(\eta_{0:\tau}\), which satisfies the averaged dynamics \(\eta_{t+1}=A^\lea \eta_t+B^\lea u^\lea_t\). Notice that this approximation replaces the disturbance term \(w_t^\lea\) in \eqref{opt: l-lqr} with its mean, given by \(w_t^\lea=0_{n_\lea}\) in \eqref{eqn: gaussian disturbance}. 

Additionally, optimization~\eqref{opt: dcp} replaces  \(\mathds{E}[f(x^\lea_{0:\tau}, x^\fow_{0:\tau})]\) in \eqref{opt: l-lqr} with the negative of the worst-case distance in \eqref{eqn: worst-case dist}. In particular, the constraints in \eqref{opt: dcp} imply that
\begin{equation*}\label{eqn: slack ub}
    \textstyle \underset{(i, j)\in\mathbb{D}}{\max} \left\{\sum\limits_{(k, l)\in\mathbb{D}\setminus \{ (i, j)\}}   \sum\limits_{t=1}^\tau \norm{\xi_t^k-\xi_t^l}_{(\Lambda^k)^{-1}_t+(\Lambda^l)^{-1}_t}^2\right\}\leq s.
\end{equation*}
Since the objective function in \eqref{opt: dcp} minimizes the value of \(s\), the above inequality holds as an equality at optimality. Hence, the objective function in \eqref{opt: dcp} is equivalent to the one in \eqref{opt: l-lqr}, except we replace \(\mathds{E}[f(x^\lea_{0:\tau}, x^\fow_{0:\tau})]\) in \eqref{opt: l-lqr} with the negative of the worst-case distance in \eqref{eqn: worst-case dist}. The idea of this replacement is to maximize the worst-case distance in \eqref{eqn: worst-case dist}, ensuring that the distance in \eqref{eqn: distance} is large for any \((i, j)\in\mathbb{D}\). Thus, any pair of hypotheses are easily distinguished.

Optimization ~\eqref{opt: dcp} is a difference-of-convex program: all of its constraints are convex, but its objective function is the difference between two convex functions \cite{horst1999dc}. A popular solution method for difference-of-convex programs is the \emph{convex-concave procedure}, which guarantees global convergence to a stationary point \cite{lanckriet2009convergence} and provides locally optimal solutions  in practice  \cite{lipp2016variations}.

\section{Numerical Experiments and Simulations}

We empirically demonstrate our results using two receding-horizon repeated trajectory games between a boundedly rational follower and a rational leader. 
In the first game, a ground rover controlled by the follower pursues one of \(d=3\) rovers controlled by the leader. At each time step, the agents play a Stackelberg trajectory game and implement only the first step of their respective input trajectories. The leader infers which leading rover the follower is following. In the second game, the leader (a driving assistant) recommends a trajectory for the follower's rover to track. The leader must infer the follower's driving type from \(d=3\) known possibilities (e.g., someone who drives fast, drives slow, or obeys the speed limit) based on how the follower responds to the leader's suggestion. 

We solve the optimization problems for each game in Julia \cite{bezanson2017julia} using the JuMP \cite{lubin2023JUMP} and MOSEK \cite{mosek2024} tools. Then, we convert the optimized trajectories into splines and transmit them to simulated TurtleBots \cite{turtlebot3} in ROS/Gazebo via virtual communication ports \cite{levy2024rossocketsJulia, levy2024rossockets}. The TurtleBots follow the trajectory splines using proportional gain controllers that adjust their linear and angular velocity over time. Since the TurtleBots' speed and maneuverability are limited by their simulated dynamics in Gazebo, we tune the game parameters to ensure that the optimized trajectories are feasible, as described in the following subsections. Note that we run ROS Noetic on Ubuntu Jammy (22.04) using the RoboStack virtual environment \cite{FischerRAM2021}.

\subsection{Leader-follower pursuit game parameters}
We define the parameters for the leader-follower pursuit game as follows. We model the follower's dynamics using an instance of \eqref{sys: follower lti}, where \(A^\fow = \exp\left(\delta\begin{bsmallmatrix}
    0_{2\times 2} & I_2\\
    0_{2\times 2} & 0_{2\times 2}
    \end{bsmallmatrix}\right)\), \(B^\fow=\int_{0}^\delta \exp\left(t\begin{bsmallmatrix}
    0_{2\times 2} & I_2\\
    0_{2\times 2} & 0_{2\times 2}
    \end{bsmallmatrix}\right)\mathrm{d}t \begin{bsmallmatrix}
    0_{2\times 2}\\
    I_2
    \end{bsmallmatrix}\), \(\Omega^\fow=1\times 10^{-5} I_4\), and \(\delta=2\) seconds is the discretization step size. We model the leader's dynamics using the joint dynamics of \(d\in\mathbb{N}\) double-integrators, where \(A^\lea = I_d\otimes A^\fow\), \(B^\lea =I_d\otimes B^\fow\), and \(\Omega^\lea=1\times 10^{-5} I_{4d\times 4d}\). For the follower's trajectory optimization in \eqref{opt: f-lqr}, we let \(Q^i = 20\diag(\begin{bsmallmatrix} 1 & 1& 0 & 0\end{bsmallmatrix} )\), \(R^i =3\times10^4 I_2\), and \(M^i = \begin{bsmallmatrix}
    0_{4\times 4(i-1)} & I_4 & 0_{4\times 4(d-i)}
    \end{bsmallmatrix}\) for all \(i\in[1, d]\). For the leader's trajectory optimization in \eqref{opt: l-lqr}, we let \(\mathbb{U}\coloneqq\{u\in\mathbb{R}^{2d}~|~\norm{u}_\infty\leq 5\times 10^{-3}\}\), and \(g(u^\lea_{0:\tau-1})=\sum_{t=1}^{\tau-2}\norm{u^\lea_{t+1}-u^\lea_t}^2\). Additionally, we constrain the velocity components of each leading rover's trajectory as follows to ensure dynamic feasibility in the simulation:
    \begin{equation*}
        \norm{\begin{bmatrix}
            0_{2\times2} & I_2 \\
            0_{2\times2} & I_2 \\
            0_{2\times2} & I_2
        \end{bmatrix} x^\lea_t}_\infty \leq 0.1 \unit[per-mode = symbol]{\metre\per\second}.
    \end{equation*}
    

\subsection{Driving assistant game parameters} \label{subsec: driving parameters}
Similarly, in the driving assistant game, we model the follower's dynamics using an instance of \eqref{sys: follower lti}, where \(A^\fow = \exp\left(\delta\begin{bsmallmatrix}
    0_{2\times 2} & I_2\\
    0_{2\times 2} & 0_{2\times 2}
    \end{bsmallmatrix}\right)\), 
\(B^\fow=\int_{0}^\delta \exp\left(t\begin{bsmallmatrix}
    0_{2\times 2} & I_2\\
    0_{2\times 2} & 0_{2\times 2}
    \end{bsmallmatrix}\right)\mathrm{d}t \begin{bsmallmatrix}
    0_{2\times 2}\\
    I_2
    \end{bsmallmatrix}\), \(\Omega^\fow=1\times10^{-4}I_4\), and \(\delta=2\) seconds. We model the leader's dynamics using an instance of \eqref{sys: leader lti}, where \(A^\lea = A^\fow\), \(B^\lea = B^\fow\), and \(\Omega^\lea=1\times10^{-7}I_4\). For the follower's trajectory optimization in \eqref{opt: f-lqr}, we let \(Q^i = \diag(\begin{bsmallmatrix} 1000 & 100& 100 & 100\end{bsmallmatrix} )\) and \(R^i =\diag(\begin{bsmallmatrix} 1\times10^4 & 1000\end{bsmallmatrix})\).  Then, we model three follower driver types, each with a unique \(M^i\). Given the leader's suggested trajectory, type 1 scales down the velocity (\(M^1 = \diag(\begin{bsmallmatrix}
    0.95 & 0.85 & 0.95 & 0.85
    \end{bsmallmatrix})\), type 2 follows the suggested velocity (\(M^2 = I_4\)), and type 3 scales up the velocity (\(M^3 = \diag(\begin{bsmallmatrix}
    1.05 & 1.15 & 1.05 & 1.15
    \end{bsmallmatrix})\). For the leader's trajectory optimization in \eqref{opt: l-lqr}, we let \(\mathbb{U}\coloneqq\{u\in\mathbb{R}^{2d}~|~\norm{u}_\infty\leq 5\times 10^{-2}\}\), and \(g(u^\lea_{0:\tau-1})=\sum_{t=1}^{\tau-2}\norm{u^\lea_{t+1}-u^\lea_t}^2\).

    Additionally, we constrain the leader and follower trajectories within the boundaries of a sideways ``L'' shaped road.  The road width \(r_w = 1.3\) meters, and the road length of the short segment \(r_l = 3\) meters. The follower should neither drive backwards nor drive off the road. Therefore, during the first road segment, for all \(t\in[1,\frac{\tau}{2}]\), we apply the following constrains: \(\begin{bsmallmatrix}
            0 & 1 & 0 & 0
        \end{bsmallmatrix} x^\lea_t \geq 0\) m, 
        \(\begin{bsmallmatrix}
            0 & 0 & 0 & 1
        \end{bsmallmatrix} x^\lea_t \geq 0 \) m/s, and
        \(\norm{\begin{bsmallmatrix}
            1 & 0 & 0 & 0
        \end{bsmallmatrix} x^\fow_t}_\infty \leq 0.5r_w \). Similarly, during the second road segment, for all \(t\in[\frac{\tau}{2},\tau]\), \(\begin{bsmallmatrix}
            1 & 0 & 0 & 0
        \end{bsmallmatrix} x^\lea_t \geq -0.5r_w \),
        \(\begin{bsmallmatrix}
            0 & 0 & 1 & 0 
        \end{bsmallmatrix}x^\lea_t \geq 0 \) m/s, and
        \(r_l \leq \begin{bsmallmatrix}
            0 & 1 & 0 & 0
        \end{bsmallmatrix}x^\fow_t \leq r_l + r_w \).


\subsection{Experimental procedure}
    We demonstrate the proposed learning methods using the simulation of two receding-horizon repeated trajectory games. At time \(t\delta\), for some \(t\in\mathbb{N}\), the leader observes its current state \(\overline{x}^\lea_t\) and the follower's current state \(\overline{x}^\fow_t\). Then, the leader solves optimization~\eqref{opt: dcp} with \(\hat{x}_0^\lea=\overline{x}^\lea_t\) and \(\hat{x}^\fow_0=\overline{x}^\fow_t\) to obtain the optimal input sequence \(u^\lea_{0:\tau-1}\). Next, the leader simulates a state trajectory \(x^\lea_{0:\tau-1}\) according to \eqref{sys: leader lti},  shares the trajectory with the follower, and applies \(u^\lea_0\) to its system. Meanwhile, at time \(t\delta\), the follower observes its current state \(\hat{x}_t^\fow\) and receives the leader's simulated trajectory \(x_{0:\tau}^\lea\). The follower solves optimization~\eqref{opt: f-lqr} with \(\hat{x}^\fow_0=\overline{x}^\fow_t\) and obtains the optimal mean and covariance sequences \((\mu_{0:\tau-1}, \Sigma_{0:\tau-1})\). Finally, the follower samples \(u_0^\fow\sim\mathcal{N}(\mu_0, \Sigma_0)\) and applies \(u_0^\fow\) to its system.

\subsection{Results}
    We simulate the players' trajectories by solving the leader's trajectory optimization~\eqref{opt: dcp} using the convex-concave procedure \cite{lipp2016variations}. Fig.~\ref{subfig: pursuit traj} shows the results of the leader-follower pursuit game. To maximize the differences between the follower's trajectories under different hypotheses, optimization~\eqref{opt: dcp} ensures that different leading rovers move in different directions. Fig.~\ref{subfig: driver traj} shows the results of the driving assistant game. Optimization~\eqref{opt: dcp} ensures that the leader suggests a trajectory to the follower that clearly distinguishes the follower's response under each driver type. 

    We further showcase the advantage of the proposed method in terms of distinguishing different hypotheses. Let \(p^\star= \begin{bmatrix}
    1 & 0 & \cdots & 0
    \end{bmatrix}^\top\in\mathbb{R}^{d}\) denote the ground truth probability of all hypotheses. Without loss of generality, we assume \((Q^\fow, R^\fow, M^\fow)=(Q^1, R^1, M^1)\). In addition, let \(p^t\) denote the \(d\)-dimensional vector such that
    \begin{equation}
        p_i^t\coloneqq \mathds{P}((Q^\fow, R^\fow, M^\fow)= (Q^i, R^i, M^i)~|~\overline{x}^\fow_{0:t})
    \end{equation}
    for all \(i\in[1, d]\), where we let \(\mathds{P}((Q^\fow, R^\fow, M^\fow)= (Q^i, R^i, M^i)~|~\overline{x}^\fow_{0})\coloneqq \frac{1}{d}\) for all \(i\in[1, d]\). That is, we choose the uniform distribution as the leader's prior distribution over all hypotheses given the follower's initial state. Notice that one can compute \(p^t\) recursively using Bayes rule. 
    
    Fig.~\ref{fig: belief distribution} shows the time history of the \(\ell_1\)-norm distance between \(p^t\) and \(p^\star\) and compares the results to where the leader uses randomly sampled trajectories. We generate the random trajectories in two steps. First, we sample a trajectory from a uniform distribution over a set of allowable states. Second, we solve an instance of optimization ~\eqref{opt: f-lqr}, where the leader follows the random trajectory as close as possible without violating it's dynamics.  
    
    In the pursuit game, we constrain the random positions within \(\pm 0.3 \unit{m}\) and the velocities within \(\pm 0.1 \unit[per-mode = symbol]{\metre\per\second} \). In the driving assistant game, we constrain the random positions within the road boundaries (defined in section \ref{subsec: driving parameters}). In the first road segment, we constrain the horizontal and vertical velocities, respectively, within \(\pm 1.3\) and \((0,0.5)\) \(\unit[per-mode = symbol]{\metre\per\second} \). Similarly, in the second road segment, we constrain the velocities within \((0, 1.3)\) and \(\pm 0.5\) \(\unit[per-mode = symbol]{\metre\per\second} \).

    The results in Fig.~\ref{fig: belief distribution} show that \(p^t\) converges faster to \(p^\star\) when the leader uses our proposed method, versus randomly generated trajectories. In the pursuit game, the proposed method consistently achieves orders of magnitude better convergence than the random trajectories. However, in the driving game, the proposed method only shows significant gains after the first six seconds. The driving game is a lower dimensional problem than the pursuit game, and the road boundaries add additional state constraints. As a result, it is more difficult to distinguish between the hypotheses. 

    
     We simulate the experiments in Gazebo (see Fig.~\ref{fig: gazebo}). We provide a video of our Gazebo simulations at \url{https://youtu.be/csQpXJh1SmM} and our code at \url{https://github.com/u-t-autonomous/active_inverse_stackelberg} and \url{https://github.com/willward20/active_inverse_stackelberg_ros}.
     
\captionsetup[figure]{belowskip=3pt}
\begin{figure}[!t]
\centering
\begin{subfigure}{0.49\columnwidth}
\centering
\begin{tikzpicture}
            \begin{axis}[
                xlabel={\footnotesize horizontal position },
                ylabel={\footnotesize vertical position },
                xmin=-3, xmax=3.5,
                ymin=-2.5, ymax=3.5,
                width=\columnwidth,
                height =\columnwidth, 
                xtick={-2, 0, 2},
                xtick style={draw=none},
                ytick={-2, 0, 2},
                ytick style={draw=none},
                legend pos=south west,
                legend style={font=\footnotesize},
                yticklabel style = {font=\footnotesize},
                xticklabel style = {font=\footnotesize},
                ymajorgrids = true,
                xmajorgrids = true
            ]

                \pgfplotsextra{\DTLforeach*{LF_type1}{\cx=x, \cy=y, \rx=l1, \ry=l2}{
                    \draw[blue!15, fill=blue!15] (\cx,\cy) ellipse [x radius=\rx, y radius=\ry];
                }}
                \pgfplotsextra{\DTLforeach*{LF_type2}{\cx=x, \cy=y, \rx=l1, \ry=l2}{
                    \draw[green!15, fill=green!15] (\cx,\cy) ellipse [x radius=\rx, y radius=\ry];
                }}
                \pgfplotsextra{\DTLforeach*{LF_type3}{\cx=x, \cy=y, \rx=l1, \ry=l2}{
                    \draw[red!15, fill=red!15] (\cx,\cy) ellipse [x radius=\rx, y radius=\ry];
                }}
                
                \addplot [mark=*, darkgray, mark size=0.5pt] table [x=x1, y=y1, col sep=space]{simulations/pursuit_traj.dat};
                \addplot [mark=*, darkgray,mark size=0.5pt] table [x=x2, y=y2, col sep=space]{simulations/pursuit_traj.dat};
                \addplot [mark=*, darkgray, mark size=0.5pt] table [x=x3, y=y3, col sep=space]{simulations/pursuit_traj.dat};

                \addplot [mark=*, blue, mark size=0.5pt] table [x=x, y=y, col sep=space]{simulations/traj_plot/LF_type1.dat};

                \addplot [mark=*, green, mark size=0.5pt] table [x=x, y=y, col sep=space]{simulations/traj_plot/LF_type2.dat};

                \addplot [mark=*, red, mark size=0.5pt] table [x=x, y=y, col sep=space]{simulations/traj_plot/LF_type3.dat};

                
    
    
            \end{axis}  
        \end{tikzpicture}
\caption{Pursuit game.}
\label{subfig: pursuit traj}
\end{subfigure}
\begin{subfigure}{\columnwidth}
\centering
\begin{tikzpicture}
            \begin{axis}[
                xlabel={\footnotesize horizontal position },
                ylabel={\footnotesize vertical position },
                xmin=-1, xmax=11.5,
                ymin=-0.25, ymax=4.75,
                width=\columnwidth,
                height =0.5*\columnwidth, 
                xtick={0, 2, 4, 6, 8, 10},
                xtick style={draw=none},
                ytick={0, 2, 4},
                ytick style={draw=none},
                legend pos=north east,
                legend style={font=\footnotesize},
                yticklabel style = {font=\footnotesize},
                xticklabel style = {font=\footnotesize},
                ymajorgrids = true,
                xmajorgrids = true
            ]

                \pgfplotsextra{\DTLforeach*{LRoad_type1}{\cx=x, \cy=y, \rx=l1, \ry=l2}{
                    \draw[blue!15, fill=blue!15] (\cx,\cy) ellipse [x radius=\rx, y radius=\ry];
                }}
                 \pgfplotsextra{\DTLforeach*{LRoad_type2}{\cx=x, \cy=y, \rx=l1, \ry=l2}{
                    \draw[green!15, fill=green!15] (\cx,\cy) ellipse [x radius=\rx, y radius=\ry];
                 }}
                 \pgfplotsextra{\DTLforeach*{LRoad_type3}{\cx=x, \cy=y, \rx=l1, \ry=l2}{
                    \draw[red!15, fill=red!15] (\cx,\cy) ellipse [x radius=\rx, y radius=\ry];
                }}

                \addplot [mark=*, black, mark size=0.75pt] table [x=x, y=y, col sep=space]{simulations/traj_plot/LRoad_leader.dat};


                \addplot [mark=*, blue, mark size=0.5pt] table [x=x, y=y, col sep=space]{simulations/traj_plot/LRoad_type1.dat};
                \addplot [mark=*, green, mark size=0.5pt] table [x=x, y=y, col sep=space]{simulations/traj_plot/LRoad_type2.dat};
                \addplot [mark=*, red, mark size=0.5pt] table [x=x, y=y, col sep=space]{simulations/traj_plot/LRoad_type3.dat};

                \addplot+ [darkgray, dashed, mark=none, const plot,
                           empty line=jump]
                        coordinates {(-0.65, 4.3)
                                     (12, 4.3)
            
                                     (0.65, 3)
                                     (12, 3)
                        };
                \addplot+ [darkgray, dashed, mark=none, const plot,
                           empty line=jump]
                        coordinates {
                                (-0.65, -0.5)
                                (-0.65, 4.3)
                        
                                (0.65, -0.5)
                                (0.65, 3)
                        };
            \end{axis}  
        \end{tikzpicture}
\caption{Driving assistant game.}
\label{subfig: driver traj}
\end{subfigure}
\caption{The trajectories of the leader (black) and followers under different hypotheses. Each color indicates one of the hypotheses. The semi-axis of the shaded ellipses represents the directions of the eigenvectors of the corresponding covariance matrices. }
\label{fig: traj distribution}
\end{figure}

\begin{figure}[!t]
\captionsetup[subfigure]{aboveskip=-5pt,belowskip=0pt}
\begin{subfigure}{0.49\columnwidth}
\centering
\begin{tikzpicture}
    \begin{semilogyaxis}[
            xlabel={{\footnotesize $t$}},
            ylabel={{\footnotesize $\norm{p^t-p^\star}_1$}},
            xmin=1, xmax=9,
            ymin=1e-8, ymax=1e1,
            width=\columnwidth,
            height =\columnwidth, 
            xtick={0, 2, 4, 6, 8},
            xtick pos=left,
            ytick={1e-6, 1e-4, 1e-2, 1e0},
            ytick style={draw=none},
            yticklabel style = {font=\footnotesize},
            xticklabel style = {font=\footnotesize},
            ymajorgrids = true,
            xmajorgrids = true,
    ]

\addplot [mark=none, color=blue,line width=1pt] table[x=t,y=err] {simulations/errplot/LF_rand.dat};

\addplot [mark=none, color=red,line width=1pt] table[x=t,y=err] {simulations/errplot/LF_opt.dat};

\addplot [name path=upper, draw=none] table[x=t,y=errmax] {simulations/errplot/LF_rand.dat};
\addplot [name path=lower, draw=none] table[x=t,y=errmin] {simulations/errplot/LF_rand.dat};
\addplot [fill=blue!10] fill between[of=upper and lower];

\addplot [name path=upper, draw=none] table[x=t,y=errmax] {simulations/errplot/LF_opt.dat};
\addplot [name path=lower, draw=none] table[x=t,y=errmin] {simulations/errplot/LF_opt.dat};
\addplot [fill=red!10] fill between[of=upper and lower];

 \end{semilogyaxis}
    \end{tikzpicture}    
\caption{Pursuit game.}
\end{subfigure}
\begin{subfigure}{0.49\columnwidth}
\centering
\begin{tikzpicture}
    \begin{semilogyaxis}[
             xlabel={{\footnotesize $t$}},
            ylabel={{\footnotesize $\norm{p^t-p^\star}_1$}},
            xmin=1, xmax=9,
            ymin=1e-8, ymax=1e1,
            width=\columnwidth,
            height =\columnwidth,
            xtick={2, 4, 6, 8},
            xtick pos=left,
            ytick={1e-6, 1e-4, 1e-2, 1e0},
            ytick style={draw=none},
            yticklabel style = {font=\footnotesize},
            xticklabel style = {font=\footnotesize},
            ymajorgrids = true,
            xmajorgrids = true,
    ]

\addplot [mark=none, color=blue,line width=1pt] table[x=t,y=err] {simulations/errplot/LRoad_rand.dat};

\addplot [mark=none, color=red,line width=1pt] table[x=t,y=err] {simulations/errplot/LRoad_opt.dat};

\addplot [name path=upper, draw=none] table[x=t,y=errmax] {simulations/errplot/LRoad_rand.dat};
\addplot [name path=lower, draw=none] table[x=t,y=errmin] {simulations/errplot/LRoad_rand.dat};
\addplot [fill=blue!10] fill between[of=upper and lower];

\addplot [name path=upper, draw=none] table[x=t,y=errmax] {simulations/errplot/LRoad_opt.dat};
\addplot [name path=lower, draw=none] table[x=t,y=errmin] {simulations/errplot/LRoad_opt.dat};
\addplot [fill=red!10] fill between[of=upper and lower];

 \end{semilogyaxis}
    \end{tikzpicture}    
\caption{Driving assistant game.}
\end{subfigure}
\caption{Comparing the convergence of \(p^t\) when using the proposed method (red) versus sampling a random trajectory (blue). The solid lines show the median over 100 simulations, while the boundaries of the colored areas mark the corresponding first and third quartiles.}
\label{fig: belief distribution}
\end{figure}

\begin{figure}[!t]
\centering
\begin{subfigure}{0.325\columnwidth}
\centering
\includegraphics[width=\columnwidth]{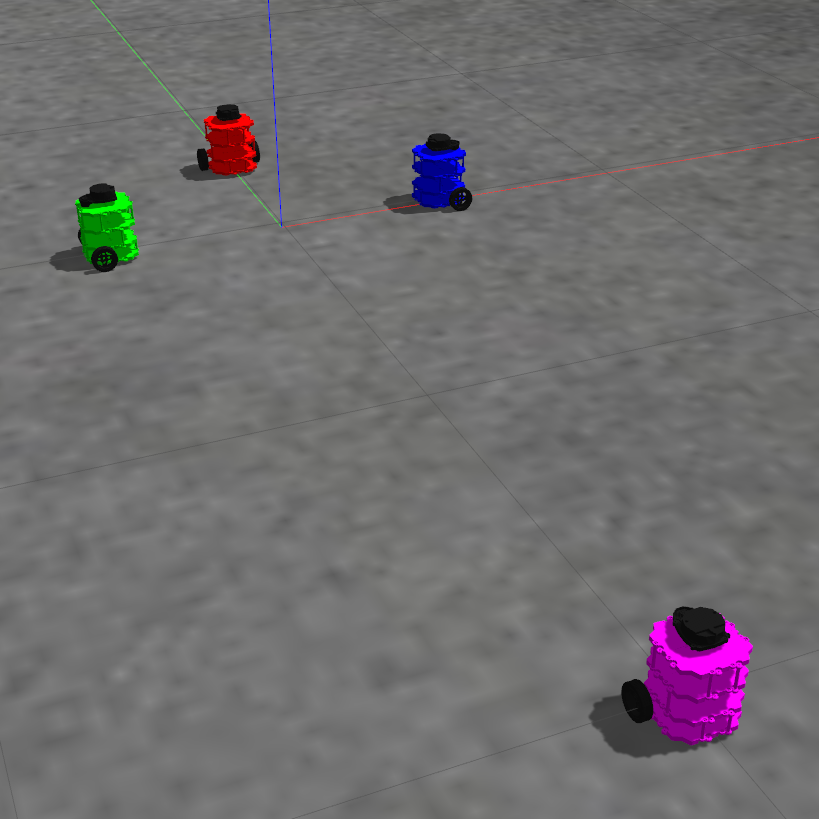}
\caption{Pursuit game.}
\end{subfigure}
\begin{subfigure}{0.65\columnwidth}
\centering
\includegraphics[width=\columnwidth]{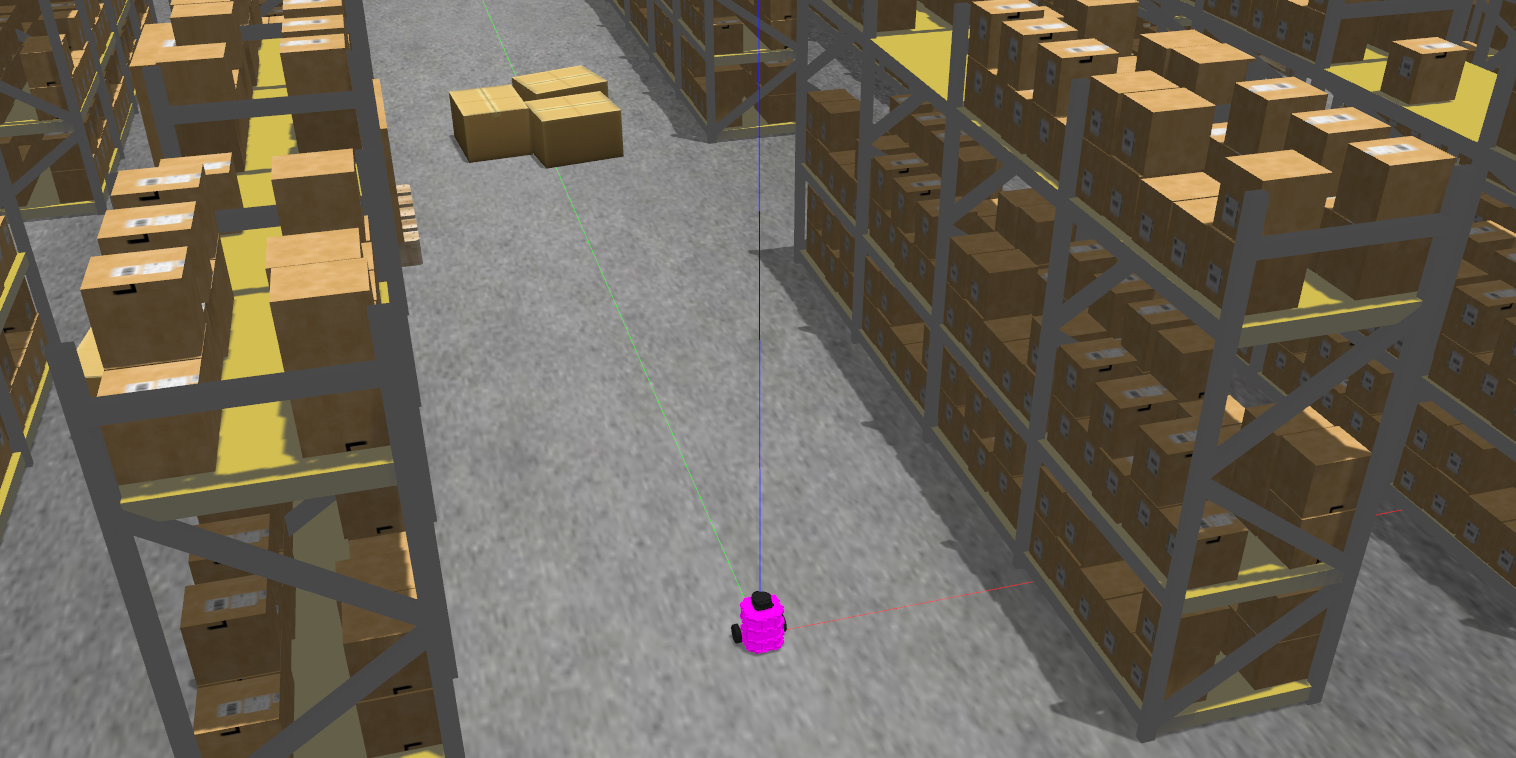}
\caption{Driving assistant game.}
\end{subfigure}
\caption{Simulating the results on TurtleBots in a virtual warehouse environment in Gazebo.}
\label{fig: gazebo}
\end{figure}

\section{Conclusion}

We formulated an inverse learning problem in a Stackelberg trajectory game, where the leader infers the type of the follower's cost function by observing its trajectories. We proposed an active inverse learning method to accelerate the leader inference by making the follower's trajectories under different hypotheses as different as possible. This method provides faster convergence of the probability of different hypotheses when compared against random inputs.

However, the current work still has limitations. For example, it considers neither nonlinear dynamics nor collision avoidance in the players' trajectory optimization. In addition, it ignores the possibility of deceptive actions of the follower. In future work, we plan to address these limitations and develop information-gathering strategies in general Bayesian games.



\section*{APPENDIX}
\subsection*{Proof of Proposition~\ref{prop: follower riccati} }
Given \(y\in\mathbb{R}^{n_\fow}\) and \(t\in[0, \tau]\), let
\begin{equation}\label{eqn: Lyapunov 1}
\begin{aligned}
&\psi_t(y, \mu_{t:\tau}, \Sigma_{t:\tau})\\
&\coloneqq \textstyle  \sum_{j=t}^\tau \mathds{E}\left[\frac{1}{2}\norm{x_j^\fow-M^\fow x_j^\lea}_{Q^\fow}^2|x_t^\fow=y\right]\\
&\textstyle +\frac{1}{2}\sum_{j=t}^{\tau-1} \left(\mathds{E}\left[\norm{u_j^\fow}_{R^\fow}^2|x_t^\fow=y\right]\textstyle - \log\det \Sigma_j\right).
\end{aligned} 
\end{equation}
where \(x_{j+1}^\fow=A^\fow x_j^\fow+B^\fow u_j^\fow\) and \(u_j^\fow\sim\mathcal{N}(\mu_j, \Sigma_j)\). 
Furthermore, let \( V_t(y)\coloneqq \min_{\mu_{t:\tau-1}, \Sigma_{t:\tau-1}} \psi_t(y, \mu_{t:\tau}, \Sigma_{t:\tau})\) for all \(t\in[0, \tau]\). Then one can verify that the optimal value of optimization~\eqref{opt: f-lqr} is \(V(\hat{x}_0^\fow)\). In addition, we can show that \(V_\tau(y)=\frac{1}{2}y^\top Q^\fow y - \langle  Q^\fow M^\fow x_\tau^\lea, y\rangle+\frac{1}{2}\norm{M^\fow x_{\tau}^\lea}^2_{Q^\fow}\), \ie, \(V_\tau(y)\) is a quadratic function of \(y\). Suppose that \(V_{t+1}(y)\) is a quadratic function of \(y\), \ie, there exists \(P_{t+1}^\fow\in\mathbb{R}^{n_\fow\times n_\fow}\), \(q_{t+1}^\fow\in\mathbb{R}^\fow\), and \(\nu_{t+1}^\fow\in\mathbb{R}\), such that \(V_{t+1}(y)=\frac{1}{2}y^\top P_{t+1}^\fow y+\langle q_{t+1}^\fow, y\rangle+\nu^\fow_{t+1}\). Then the principle of dynamic programming together imply that
\begin{equation}\label{eqn: Vt}
\begin{aligned}
    &V_t(y) =  \textstyle \frac{1}{2}\norm{y-M^\fow x_t^\fow}^2_{Q^\fow} -\frac{1}{2}\log\det\Sigma_t\\
    &+\min_{\mu_t, \Sigma_t} \textstyle \mathds{E}\left[\frac{1}{2}\norm{u_t^\fow}_{R^\fow}^2+V_{t+1}(A^\fow x_t^\fow+B^\fow u_t^\fow+w_t^\fow)|x_t^\fow=y\right],
\end{aligned}  
\end{equation}
where \(u_t^\fow|x_t^\fow\sim\mathcal{N}(\mu_t, \Sigma_t)\). Observe that
\begin{equation}\label{eqn: R term}
\begin{aligned}
     &\mathds{E}[\norm{u_t^\fow}_{R^\fow}^2|x_t^\fow=y] \\
    &=\mu_t^\top R^\fow \mu_t+\mathds{E}[\tr ((u_t^\fow-\mu_t)(u_t^\fow-\mu_t)^\top R^\fow) |x_t^\fow=y]\\
    &=\mu_t^\top R^\fow \mu_t +\tr(\Sigma_tR^\fow).
\end{aligned} 
\end{equation}
In addition, we can show that
\begin{equation}\label{eqn: Q term}
    \begin{aligned}
        &\mathds{E}[V_{t+1}(A^\fow x_t^\fow+B^\fow u_t^\fow+w_t^\fow)|x_t^\fow=y]\\
        &= \textstyle\frac{1}{2} (A^\fow y+B^\fow\mu_t)^\top P_{t+1}^\fow(A^\fow y+B^\fow \mu_t)\\
        &+\mathds{E}[(A^\fow x_t^\fow+B^\fow \mu_t)^\top P_{t+1}^\fow B^\fow (u_t^\fow-\mu_t)|x_t^\fow=y]\\
        &\textstyle+\frac{1}{2}\mathds{E}[(B^\fow (u_t^\fow-\mu_t))^\top P_{t+1}^\fow B^\fow (u_t^\fow-\mu_t)|x_t^\fow=y]\\
        &+\langle q_{t+1}^\fow, A^\fow y+B^\fow \mu_t\rangle +\frac{1}{2}\mathds{E}[\tr(w_t^\fow(w_t^\fow)^\top P_{t+1})]+\nu_{t+1}^\fow\\
        &=\textstyle \frac{1}{2} (A^\fow y+B^\fow \mu_t)^\top P_{t+1}^\fow (A^\fow y+B^\fow \mu_t)+\frac{1}{2}\tr(\Omega^\fow P_{t+1})\\
        &\textstyle+\frac{1}{2}\tr(\Sigma_t(B^\fow)^\top P_{t+1}^\fow  B^\fow )+\langle q_{t+1}^\fow, A^\fow y+B^\fow \mu_t\rangle+\nu_{t+1}^\fow.
    \end{aligned}
\end{equation}
Substituting \eqref{eqn: Q term} and \eqref{eqn: R term} into \eqref{eqn: Vt} gives
\begin{equation}\label{eqn: Vt explicit}
    \begin{aligned}
        &V_t(y)=\textstyle+ \frac{1}{2}\tr(\Sigma_t R^\fow + \Omega^\fow P_{t+1}+(B^\fow)^\top P_{t+1}^\fow B^\fow ))+\nu_{t+1}^\fow\\
        &+\textstyle \frac{1}{2}y^\top (Q^\fow +(A^\fow)^\top P_{t+1}^\fow A^\fow )y+ \langle A^\top q_{t+1}^\fow-Q^\fow M^\fow x_t^\lea, y\rangle\\
        &\textstyle+\frac{1}{2}\mu_t^\top (R^\fow +(B^\fow)^\top P_{t+1}^\fow B^\fow)\mu_t+\frac{1}{2}(x_t^\lea)^\top (M^\fow)^\top Q^\fow M^\fow x_t^\lea\\
        &\textstyle + \langle (B^\fow)^\top q_{t+1}^\fow+(B^\fow)^\top P_{t+1}^\fow A^\fow y, \mu_t\rangle-\frac{1}{2}\log\det \Sigma_t.
    \end{aligned}
\end{equation}
By setting the derivative of \(V_t(y)\) with respect to \(\mu_t\) and \(\Sigma_t\) to zero, we obtain \eqref{eqn: mean & var for follower input}. By substituting \eqref{eqn: mean & var for follower input} into \eqref{eqn: Vt explicit}, we can show that \(V_t(y)=\frac{1}{2}y^\top P_t^\fow y+\langle q_t^\fow, y\rangle+\nu_t^\fow\), where \(Q_t^\fow\) and \(q_t^\fow\) satisfy \eqref{eqn: follower DP}.

Next, let \(K_t=-\Sigma_t (B^\fow)^\top P_{t+1}^\fow A^\fow\) and \(b_t = -\Sigma_t (B^\fow)^\top q^\fow_{t+1}\). Then \eqref{eqn: mean & var for follower input} implies that \(u_t^\fow|x_t^\fow\sim\mathcal{N}(K_tx_t^\fow+b_t, \Sigma_t)\). Since \(x_0^\fow=\hat{x}^\fow_0\), by using the results in \cite[p. 91]{bishop2006pattern} we can show the following:
\begin{equation*}
    \begin{bmatrix}
    x_1^\fow\\ u_1^\fow 
    \end{bmatrix} \sim\mathcal{N}\left(
    \begin{bmatrix}
    \xi_0\\
    K_0\xi_0+b_0
    \end{bmatrix}, \begin{bmatrix}
    \Lambda_0 & K_0\Lambda_0\\
  \Lambda_0 K_0^\top & \Sigma_0 +K_0^\top \Lambda_0K_0
    \end{bmatrix}
    \right).
\end{equation*}
Therefore \(x_{t+1}^\fow=A^\fow x_t^\fow+B^\fow u_t^\fow+w_t^\fow \sim\mathcal{N}(\xi_1, \Lambda_1)\), where \(\xi_1\) and \(\Sigma_1\) satisfy \eqref{eqn: follower state traj}. By repeating similar steps for \(t\in [2, \tau]\) we can show that \eqref{eqn: follower state traj} holds for all \(t\in[0, \tau-1]\), which completes the proof.
 \label{sec:appendix}

\section*{ACKNOWLEDGMENT} This work was supported by a National Science Foundation CAREER award under Grant No. 2336840, as well as the following NSF grants: ECCS-2145134, CNS-2218759, and CCF-2211542. Additionally, this work was supported by ONR subcontracts with Phillips (N00014-24-S-B001) and the University of Pennsylvania (N00014-20-1-2115).

\bibliographystyle{IEEEtran}
\bibliography{IEEEabrv,reference.bib}


\end{document}